\documentclass{article}

\PassOptionsToPackage{numbers,sort&compress}{natbib}
\ifdefined\MLFourPSReview
  \usepackage{neurips_2026}
\else
  \usepackage[preprint]{neurips_2026}
\fi
\usepackage[utf8]{inputenc}
\usepackage[T1]{fontenc}
\usepackage{microtype}

\usepackage{amsmath,amssymb,amsthm,mathtools,bm}
\newtheorem{theorem}{Theorem}[section]

\theoremstyle{definition}

\theoremstyle{remark}

\usepackage{graphicx}
\usepackage{booktabs}
\usepackage{subcaption}
\usepackage{algorithm}
\usepackage{algpseudocode}
\graphicspath{{figures/}}

\usepackage[hidelinks]{hyperref}
\usepackage{orcidlink}
\usepackage[capitalise,noabbrev,nameinlink]{cleveref}
\ifdefined\MLFourPSReview
  \hypersetup{pdfauthor={},pdfsubject={}}
  \makeatletter
  \renewcommand{\@noticestring}{Submitted to the 9th Workshop on Machine Learning and the Physical Sciences (ML4PS 2026). Do not distribute.}
  \makeatother
\else
  \hypersetup{pdfauthor={Rafael Coelho Lopes de Sa and Jay Sandesara}}
\fi

\hypersetup{pdftitle={Finite Asimov Sample Construction in Unbinned Neural Simulation-Based Inference}}

\title{Finite Asimov Sample Construction in\\ Unbinned Neural Simulation-Based Inference}

\author{%
  Rafael Coelho Lopes de Sa\,\orcidlink{0000-0001-5200-9195} \\
  Department of Physics \\
  University of Massachusetts Amherst \\
  MA, USA \\
  \texttt{rclsa@umass.edu}
  \And
  Jay Sandesara\,\orcidlink{0000-0002-6016-8011} \\
  Data Science Institute \\
  University of Wisconsin--Madison \\
  WI, USA \\
  \texttt{jsandesara@wisc.edu}
}

\begin{document}
\maketitle

\begin{abstract}
Expected sensitivity calculations in frequentist analysis using neural simulation-based inference often require large simulated samples to approximate an unbinned Asimov dataset. We construct a finite weighted reference sample, for analyses based on learned density ratios, for which the generating parameters globally maximize the weighted likelihood. A toy example with five-dimensional observable space based loosely on high-energy physics models shows that an Asimov dataset consisting of as few as 256 weighted reference events closely reproduce the expected test statistic scans obtained with two million simulated reference events. The resulting test statistic distributions are also shown to agree with independent simulator-based pseudo-experiments. We note that this agreement with the simulator depends on the accuracy of the learned ratios and must be validated explicitly.
\end{abstract}

\section{Introduction}

Neural ratio estimation (NRE) enables frequentist inference with an intractable likelihood by learning its ratio to some parameter-independent reference distribution,
\begin{equation}
  r_{\boldsymbol\psi}(\mathbf{x};\boldsymbol\theta)
  \simeq \frac{p(\mathbf{x};\boldsymbol\theta)}{p_{\rm ref}(\mathbf{x})}.
  \label{eq:nre}
\end{equation}
Here $\mathbf{x}\in\mathbb{R}^d$ denotes a complete and possibly correlated observation. A classifier $D_{\boldsymbol\psi}(\mathbf{x};\boldsymbol\theta)$ trained to distinguish target from reference samples with balanced class weights provides the ratio estimate through its odds, $r_{\boldsymbol\psi}=D_{\boldsymbol\psi}/(1-D_{\boldsymbol\psi})$~\cite{Cranmer:2015bka}. Writing $\boldsymbol\theta=(\mu,\boldsymbol\alpha)$ for a parameter of interest $\mu$ and nuisance parameters $\boldsymbol\alpha$, this ratio prediction can be used to compute the profile negative log-likelihood ratio statistic
\begin{equation}
\begin{gathered}
  t_\mu=-2\log\frac{p(\mathbf{x};\mu,\widehat{\widehat{\boldsymbol\alpha}}_\mu)}
  {p(\mathbf{x};\widehat\mu,\widehat{\boldsymbol\alpha})},\\
  (\widehat\mu,\widehat{\boldsymbol\alpha})=\arg\max_{\mu,\boldsymbol\alpha}p(\mathbf{x};\mu,\boldsymbol\alpha),
  \qquad \widehat{\widehat{\boldsymbol\alpha}}_\mu=\arg\max_{\boldsymbol\alpha}p(\mathbf{x};\mu,\boldsymbol\alpha).
\end{gathered}
\label{eq:profile}
\end{equation}
Its sampling distribution determines confidence intervals~\cite{Cowan:2010js}.  Replacing $p$ by $r_{\boldsymbol\psi}$ throughout Eq.~\eqref{eq:profile} gives the NRE approximation, since the parameter-independent reference cancels from this statistic and leaves the maximizers unchanged. This underlies the ATLAS neural simulation-based inference framework and its off-shell Higgs measurement~\cite{ATLAS:2024rpr,ATLAS:2024tgo}.

In high-energy physics (HEP), the usual specialization combines a Poisson event count, conditionally iid events $\{\mathbf{x}_i\}$, a mixture of processes, and auxiliary constraint terms to build an extended likelihood model:
\begin{equation}
  \mathcal L(\boldsymbol\theta)\propto
  e^{-\lambda(\boldsymbol\theta)}\prod_{i=1}^{n}\nu(\mathbf{x}_i;\boldsymbol\theta)
  f(\mathbf a;\boldsymbol\alpha),\qquad
  \nu=\sum_s\lambda_s p_s,\quad \lambda=\int\nu(\mathbf{x},\boldsymbol{\theta})\,d\mathbf{x}=\sum_s\lambda_s.
  \label{eq:hep}
\end{equation}
Here $s$ labels components such as signal and background, each with a normalized density $p_s$ and expected event count $\lambda_s$. Their combined intensity $\nu$ gives the expected number of events in an infinitesimal region $d\mathbf{x}$ as $\nu(\mathbf{x};\boldsymbol\theta)d\mathbf{x}$. The factor $f(\mathbf a;\boldsymbol\alpha)$ is the likelihood of independent auxiliary measurements $\mathbf a$ that constrain the nuisance parameters $\boldsymbol\alpha$.

An Asimov dataset represents the expected experiment with a fit that returns the generating parameters~\cite{Cowan:2010js}. In a binned analysis, each bin is filled with its expected count. ATLAS approximated the unbinned counterpart using a large weighted Monte Carlo (MC) sample~\cite{ATLAS:2024rpr}. Finite MC fluctuations displace its likelihood maximum, while repeated evaluations on millions of events make expected scans expensive. ATLAS reports likelihood maximizations requiring $\mathcal O(10-20)$ CPU hours and $\mathcal O(100-500)$~GB in this regime. We address the finite-sample displacement directly, allowing the integration sample to be chosen for scan accuracy rather than approximate score cancellation.

\section{Finite sample unbinned Asimov construction}
\label{sec:construction}

Let $r_{s,\boldsymbol\psi}(\mathbf{x};\boldsymbol\theta)$ be fixed, finite learned estimates of the ratios $p_s/p_{\rm ref}$ between a target and a reference density. We also allow signed component densities $p_s$ and signed integrated yields $\lambda_s$. Signed ratios can be learned with the alternative loss functionals~\cite{Drnevich:2024PARE} or with other methods that do not use neural networks. Choose reference points $\mathcal X_M=\{(\mathbf{x}_m,\omega_m)\}_{m=1}^M$, with $\omega_m>0$ and $\sum_m\omega_m=1$. Ordinary draws from $p_{\rm ref}$ have $\omega_m=1/M$. The reference must cover the target support so that $r_{s,\psi}$ are finite. Define, on these same points at every tested parameter value,
\begin{equation}
  Z_s(\boldsymbol\theta)=\sum_m\omega_m r_{s,\boldsymbol\psi}(\mathbf{x}_m;\boldsymbol\theta),
  \qquad
  h(\mathbf{x};\boldsymbol\theta)=\sum_s\lambda_s(\boldsymbol\theta)
  \frac{r_{s,\boldsymbol\psi}(\mathbf{x};\boldsymbol\theta)}{Z_s(\boldsymbol\theta)}.
  \label{eq:normalization}
\end{equation}
Assume each $Z_s$ is finite and nonzero. Zero-integral components require a different decomposition or normalization of the positive total intensity. At a single pre-determined generating point $\boldsymbol\theta_A = (\mu_A, \boldsymbol\alpha_A)$, assign Asimov weights equal to the finite intensity masses:
\begin{equation}
  v_m(\boldsymbol\theta)=\omega_m h(\mathbf{x}_m;\boldsymbol\theta),
  \qquad w_m^A=v_m(\boldsymbol\theta_A),
  \qquad \sum_m v_m(\boldsymbol\theta)=\lambda(\boldsymbol\theta).
  \label{eq:masses}
\end{equation}
Thus $\sum_m w_m^A=\lambda(\boldsymbol\theta_A)$ exactly, independently of $M$. With fixed auxiliary observations $\mathbf a_A$, the weighted log likelihood, up to parameter-independent terms, is
\begin{equation}
  \ell_{A,M}(\boldsymbol\theta)
  =-\lambda(\boldsymbol\theta)+\sum_m w_m^A\log h(\mathbf{x}_m;\boldsymbol\theta)
  +\log f(\mathbf a_A;\boldsymbol\alpha).
  \label{eq:asimov}
\end{equation}
Algorithm~\ref{alg:nre-asimov} summarizes the construction.

\begin{algorithm}[htbp]
\caption{Finite Asimov construction with NRE}
\label{alg:nre-asimov}
\begin{algorithmic}[1]
\Require Sampler for $p_{\rm ref}$; fixed functions $\{r_{s,\boldsymbol\psi},\lambda_s\}$; $\boldsymbol\theta_A = (\boldsymbol{\mu}_A, \boldsymbol\alpha_A)$; sample size $M$; constraint $f$.
\Ensure Weighted sample $\mathcal A_M$ and log-likelihood function $\ell_{A,M}$.
\State Draw $\mathbf{x}_m\sim p_{\rm ref}$, set $\omega_m=1/M$, and keep these points fixed.
\Function {$H$}{$\boldsymbol\theta$} \Comment{Called at every scan or profiling point}
  \State $Z_s(\boldsymbol{\theta})\gets\sum_m\omega_m r_{s,\boldsymbol\psi}(\mathbf{x}_m;\boldsymbol\theta)$ for every process $s$.
  \State \Return $\bigl\{h(\mathbf{x}_m;\boldsymbol\theta)=\sum_s\lambda_s(\boldsymbol\theta)r_{s,\boldsymbol\psi}(\mathbf{x}_m;\boldsymbol\theta)/Z_s(\boldsymbol{\theta})\bigr\}_{m=1}^M$.
\EndFunction
\State $\{h_m^A\}\gets H(\boldsymbol\theta_A)$; set $w_m^A=\omega_m h_m^A$ and $\mathcal A_M=\{(\mathbf{x}_m,w_m^A)\}$.
\State Verify $\sum_m w_m^A=\lambda(\boldsymbol\theta_A)$; choose $\mathbf a_A$ with $\boldsymbol\alpha_A\in\arg\max_{\boldsymbol\alpha}f(\mathbf a_A;\boldsymbol\alpha)$.
\State \Return $\mathcal A_M$ and $\ell_{A,M}(\boldsymbol\theta)=-\lambda(\boldsymbol\theta)+\sum_m w_m^A\log h(\mathbf{x}_m;\boldsymbol\theta)+\log f(\mathbf a_A;\boldsymbol\alpha)$.
\end{algorithmic}
\end{algorithm}

\begin{theorem}[Finite sample Asimov closure]
Suppose $0<v_m(\boldsymbol\theta)<\infty$ for every $m$ throughout the parameter domain. Choose $\mathbf a_A$ such that $f(\mathbf a_A;\boldsymbol\alpha)$ is globally maximized at $\boldsymbol\alpha_A$, with $0<f(\mathbf a_A;\boldsymbol\alpha_A)<\infty$. Then $\boldsymbol\theta_A$ is a global maximizer of $\ell_{A,M}$ for every $M$.
\end{theorem}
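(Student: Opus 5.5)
The plan is to rewrite $\ell_{A,M}$ entirely in terms of the finite masses $v_m(\boldsymbol\theta)=\omega_m h(\mathbf{x}_m;\boldsymbol\theta)$ and to recognize the event part as a weighted Poisson (Kullback--Leibler type) log-likelihood. Two facts from Eq.~\eqref{eq:masses} drive this. First, $\sum_m v_m(\boldsymbol\theta)=\lambda(\boldsymbol\theta)$ exactly: the normalization by $Z_s$ in Eq.~\eqref{eq:normalization} makes $\sum_m\omega_m r_{s,\boldsymbol\psi}(\mathbf{x}_m;\boldsymbol\theta)/Z_s(\boldsymbol\theta)=1$ for each $s$, and summing over $s$ with weights $\lambda_s$ gives the claim. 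Second, $\log h(\mathbf{x}_m;\boldsymbol\theta)=\log v_m(\boldsymbol\theta)-\log\omega_m$, and the $-\sum_m w^A_m\log\omega_m$ term does not depend on $\boldsymbol\theta$. Hence, up to a $\boldsymbol\theta$-independent constant,
\begin{equation*}
  \ell_{A,M}(\boldsymbol\theta)=\sum_m\Bigl[w_m^A\log v_m(\boldsymbol\theta)-v_m(\boldsymbol\theta)\Bigr]+\log f(\mathbf a_A;\boldsymbol\alpha).
\end{equation*}
The hypothesis $0<v_m<\infty$ ensures every logarithm is finite. This is needed because $h$, $\lambda_s$ and $p_s$ may be signed, so positivity of the total intensity must be imposed on the masses rather than inherited termwise.

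Next I would bound each summand separately. For fixed $w>0$, the function $g(v)=w\log v-v$ on $(0,\infty)$ is strictly concave, since $g''(v)=-w/v^2<0$, and its derivative vanishes at $v=w$. Therefore $g(v)\le g(w)$, with equality iff $v=w$. Equivalently, $w\log(w/v)-w+v\ge 0$. By definition $w_m^A=v_m(\boldsymbol\theta_A)>0$, so each summand is maximized over all $\boldsymbol\theta$ by $\boldsymbol\theta_A$:
\begin{equation*}
  w_m^A\log v_m(\boldsymbol\theta)-v_m(\boldsymbol\theta)\le w_m^A\log v_m(\boldsymbol\theta_A)-v_m(\boldsymbol\theta_A).
\end{equation*}
The constraint term is handled by assumption. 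Because $\log f(\mathbf a_A;\boldsymbol\alpha)\le\log f(\mathbf a_A;\boldsymbol\alpha_A)$, and the right-hand side is finite since $0<f(\mathbf a_A;\boldsymbol\alpha_A)<\infty$, this term is also maximized at $\boldsymbol\theta_A$. The constraint term depends only on $\boldsymbol\alpha$, so it does not interfere with the event terms.

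Adding the $M$ event inequalities and the constraint inequality gives $\ell_{A,M}(\boldsymbol\theta)\le\ell_{A,M}(\boldsymbol\theta_A)$ for every $\boldsymbol\theta$ in the domain. Nothing in the argument used the size of $M$ or how the points were drawn, so the conclusion holds for every $M$. The only delicate step is the exact identity $\sum_m v_m=\lambda$, which makes the $-\lambda(\boldsymbol\theta)$ term pair off with the individual masses. It requires that the same points and weights be used to compute $Z_s(\boldsymbol\theta)$ at every $\boldsymbol\theta$, and that each $Z_s(\boldsymbol\theta)$ be finite and nonzero, as the construction stipulates. The argument gives only that $\boldsymbol\theta_A$ is \emph{a} global maximizer, not the unique one, since different $\boldsymbol\theta$ may produce identical masses.
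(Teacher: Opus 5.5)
Your proposal is correct and is essentially the paper's argument: the exact identity $\sum_m v_m(\boldsymbol\theta)=\lambda(\boldsymbol\theta)$ lets you trade $-\lambda$ for the individual masses. Each event term then satisfies the Poisson/Kullback--Leibler inequality, where your concavity bound on $w\log v-v$ is the paper's $u_m-1-\log u_m\ge 0$ with $u_m=v_m(\boldsymbol\theta)/v_m^A$, and the constraint term is nonpositive relative to $\boldsymbol\alpha_A$ by hypothesis.
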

\begin{proof}
Write $v_m^A=v_m(\boldsymbol\theta_A)$ and $u_m=v_m(\boldsymbol\theta)/v_m^A>0$. Using Eq.~\eqref{eq:masses},
\begin{align}
  \ell_{A,M}(\boldsymbol\theta)-\ell_{A,M}(\boldsymbol\theta_A)
  &=-\lambda(\boldsymbol\theta)+\lambda(\boldsymbol\theta_A)
  +\sum_m v_m^A\log u_m
  +\log\frac{f(\mathbf a_A;\boldsymbol\alpha)}{f(\mathbf a_A;\boldsymbol\alpha_A)}
  \nonumber\\
  &=-\sum_m v_m^A\left[u_m-1-\log u_m\right]
  +\log\frac{f(\mathbf a_A;\boldsymbol\alpha)}{f(\mathbf a_A;\boldsymbol\alpha_A)}
  \leq0.\label{eq:proof}
\end{align}
Each $v_m^A>0$, each bracket is nonnegative, and the auxiliary term is nonpositive. Equality holds exactly when $u_m=1$ for every $m$ and $f(\mathbf a_A;\boldsymbol\alpha)=f(\mathbf a_A;\boldsymbol\alpha_A)$, as at $\boldsymbol\theta_A$.
\end{proof}

No differentiability is required; for differentiable models at an interior generating point, the score vanishes. The maximum may be nonunique. The condition on the auxiliary terms is satisfied for Gaussian constraints, but not for arbitrary constraint families. Strict positivity is required for the combined fitted intensity at the fixed reference points throughout the scan. Signed components and model misspecification preserve closure if total positivity and normalization hold, but agreement with the simulator requires independent validation.

The normalizers in Eq.~\eqref{eq:normalization} must include the complete parameter-dependent ratios and be recomputed on the same fixed $\mathcal X_M$ during profiling. A nominal normalizer suffices for parameters that affect only yields. Shape-dependent ratios generally require updating $Z_s(\boldsymbol\theta)$, freezing these normalizers need not preserve closure in shape directions. In HEP, score directions associated with signal-strength and normalization systematics only require nominal normalization, while the case for shape systematics will depend on the specific implementation.

Note that, while the test statistic $t_{\mu,A}=2[\ell_{A,M}(\boldsymbol\theta_A)-\max_{\boldsymbol\alpha}\ell_{A,M}(\mu,\boldsymbol\alpha)]$ is nonnegative, its curvature and values away from the minimum still require convergence with $M$. Independent reference samples can assess the remaining integration error, including fluctuations of the normalizers. This correction neither retrains the estimator nor removes local errors in its learned ratios. 

To quantify discovery significance at $\mu_A>0$, set $q_{0,A}=t_{0,A}$. Under the usual large-event yield, local Wald assumptions~\cite{Cowan:2010js} hold,
\begin{equation}
  \sigma_A\simeq\frac{\mu_A}{\sqrt{q_{0,A}}},\qquad
  Z_A\simeq\sqrt{q_{0,A}}.
  \label{eq:wald}
\end{equation}
These are asymptotic approximations, separate from the exact finite-$M$ theorem.

\section{Demonstration on a toy example}
\label{sec:demonstration}

We choose a toy HEP-inspired statistical model where a signal hypothesis is scaled linearly by signal strength parameter $\mu$ and background hypothesis is kept fixed. Each sampled event spans a five-dimensional observable space modeled using correlated Gaussian mixtures and Gaussian detector smearing, and undergoes a fixed classifier selection. The selected expected yields are approximately $\lambda_S=0.5\times 10^3$ and $\lambda_B=150\times 10^3$, with intensity $\nu=\mu\lambda_Sp_S+\lambda_Bp_B$, $\mu\geq0$, and no nuisance parameters. The reference is $p_{\rm ref}=(p_S+p_B)/2$ after selection. For each process, four binary classifiers with four 1024-unit Swish hidden layers estimate $p_s/p_{\rm ref}$; their ratios are averaged. Each of the signal, background, and reference training samples contains five million events. Independent validation samples contain $5\times10^5$ events each. 

\begin{figure}[htbp!]
  \centering
  \begin{subfigure}{0.49\textwidth}
    \includegraphics[width=\linewidth]{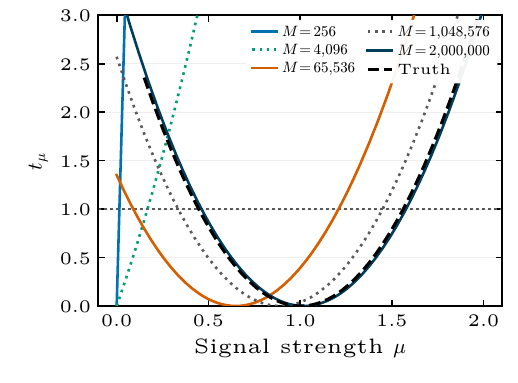}
    \caption{Conventional simulated samples}
  \end{subfigure}\hfill
  \begin{subfigure}{0.49\textwidth}
    \includegraphics[width=\linewidth]{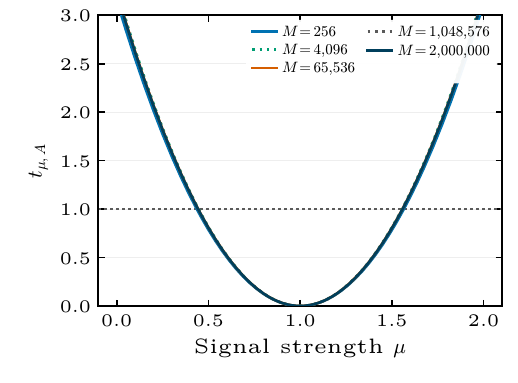}
    \caption{Normalized Asimov samples}
  \end{subfigure}
  \caption{\small Expected test statistic scans at $\mu_A=1$. (a) Conventional signal and background samples. The dashed black analytic curve uses five million selected events per process and is shifted horizontally to place its sampled minimum at $\mu=1$. (b) Reference samples with the normalization and weights of Eqs.~\eqref{eq:normalization}--\eqref{eq:masses}.}
  \label{fig:scans}
\end{figure}

At $\mu_A=1$, we compare conventional samples containing $M/2$ signal and $M/2$ background events, weighted by $2\lambda_S/M$ and $2\lambda_B/M$, with $M$ reference events treated by Eqs.~\eqref{eq:normalization}--\eqref{eq:asimov}. We use eight independent repetitions and nested sizes $M=256$ to $2\times10^6$. A separate five-million-event reference sample fixes the normalization of the likelihood used for conventional samples and all toy fits. Each smaller corrected sample instead defines its own finite model.

Figure~\ref{fig:scans} shows the first repetition. Already at $M=256$, the mean $q_{0,A}$ is $3.20$ with a between-repetition standard deviation of $0.11$, compared with $3.24$ for the large reference sample. The conventional $M=256$ scan shown has its minimum at the boundary; even at $M=2\times10^6$, the conventional fitted minima have standard deviation $0.09$ across repetitions. The gain is therefore in both exact location and much faster stabilization of expected sensitivity.

\begin{figure}[htbp!]
  \centering
  \begin{subfigure}{0.49\textwidth}
    \includegraphics[width=\linewidth]{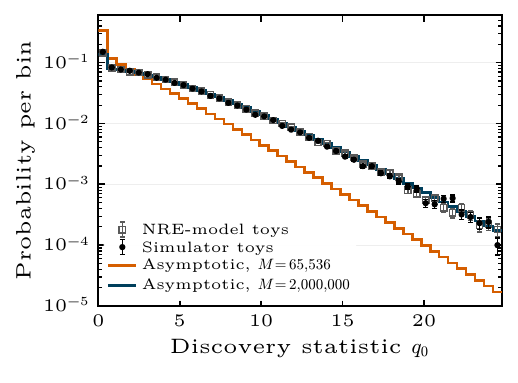}
    \caption{Conventional predictions}
  \end{subfigure}\hfill
  \begin{subfigure}{0.49\textwidth}
    \includegraphics[width=\linewidth]{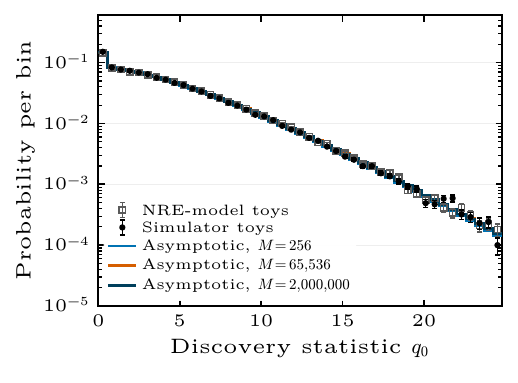}
    \caption{Normalized Asimov predictions}
  \end{subfigure}
  \caption{\small Discovery statistic at $\mu_A=1$ for $10^5$ NRE-model toys (gray open squares) and $10^5$ simulator-bank toys (black circles), both with Poisson errors, compared with the asymptotic approximation based on the width Eq.~\eqref{eq:wald} evaluated using (a) conventional and (b) corrected Asimov likelihood fits. Errors reflect toy counts conditional on the finite simulation banks. The two set of markers largely overlap, empirically demonstrating that the density ratios are well trained.}
  \label{fig:toys}
\end{figure}

We generate $10^5$ pseudo-experiments from each of two sources: the ratio-weighted large reference sample obtained by reweighing reference sample toys, and independent simulator banks of five million events per process. Figure~\ref{fig:toys} shows that small corrected samples already give useful predictions of the discovery distribution using the non-central $\chi^2$ asymptotic approximations, whereas conventional predictions are sensitive to MC fluctuations. The two toy sources agree closely here. This is empirical and the result of well-trained density ratios, not guaranteed by normalization. Residual differences, especially in sparsely populated tails, can arise from ratio error, finite simulation banks, and the asymptotic approximation. Simulator validation and, where needed, calibration remain necessary.

\section{Conclusion}

We showed that a finite sample Asimov dataset can be constructed by requiring a consistent ratio normalization. In the demonstration, a quadrature with a few hundred integration points, comparable in number to the bins of a large binned analysis, suffice for useful expected inference. This addresses a major computational burden, as demonstrated for example in the the ATLAS experiment measurement with an NSBI workflow, by reducing the size of the sample used for building expected test statistic scans. We note that it does not, however, reduce the simulation size needed to train and validate the learned likelihood, and the required integration size remains analysis dependent.
All source code and notebooks required to reproduce this demonstration are publicly available~\cite{NREAsimovDemo}.

\bibliographystyle{unsrtnat}
\bibliography{references}
\end{document}